

\documentclass[preprint, 12pt]{article}




\usepackage{amssymb}
\usepackage{amsthm}
\usepackage{amsmath}
\usepackage{graphicx}
\usepackage{hyperref}

\begin{document}




\title{Robustness of the  nonlinear PI control method to ignored actuator dynamics}

\author{Haris E. Psillakis\\
Hellenic Electricity Network Operator S.A.\\
\texttt{psilakish@hotmail.com}}
\date{}
\maketitle
\begin{abstract}
This note examines the robustness properties of the nonlinear PI control method to ignored actuator dynamics. It is proven that global boundedness and regulation can be achieved for sector bounded nonlinear systems with unknown control directions if the actuator dynamics are sufficiently fast and  the nonlinear PI control gain is chosen from a subclass of the Nussbaum function class. Simulation examples are also presented that demonstrate the validity of our arguments.
\end{abstract}






\section{Introduction}
\label{intro}
\newtheorem{theorem}{Theorem}
\newtheorem{lemma}{Lemma}
\newtheorem{remark}{Remark}
\newtheorem{definition}{Definition}
\newtheorem{assumption}{Assumption}
\newtheorem{corollary}{Corollary}
The control problem for systems with unknown control directions has received significant research interest over the last decades \cite{Nussbaum_paper}-\cite{Scheinker2013}. The main solution approach employs Nussbaum functions (NFs) \cite{Nussbaum_paper}-\cite{Yu2013} as control gains with suitable parameter adaptation laws. NFs are continuous functions $N:\mathbb{R}\rightarrow\mathbb{R}$ having the following properties
\begin{align}
    \limsup_{\zeta\rightarrow \pm\infty}&\frac{1}{\zeta}\int_{0}^{\zeta}{N(s)ds}=+\infty\label{nussbaum propertyp}\\
    \liminf_{\zeta\rightarrow \pm\infty}&\frac{1}{\zeta}\int_{0}^{\zeta}{N(s)ds}=-\infty.\label{nussbaum propertym}
\end{align}
Typical examples of NFs are $\zeta^2\sin(\zeta)$ and $\exp(\zeta^4)\cos(\zeta)$ among others.

In \cite{Ortega_paper}, a nonlinear PI control scheme was proposed by Ortega, Astolfi and Barabanov   that also addresses the unknown control direction problem. Its main difference with the Nussbaum methodology is the inclusion of a proportional term in the control gain variable (see \cite{Ortega_paper}, \cite{AKO_book}). Moreover, in the nonlinear PI approach, the Nussbaum property \eqref{nussbaum propertyp}, \eqref{nussbaum propertym} is not a necessary condition and therefore  gains of the form $z\cos(z)$ that do not satisfy \eqref{nussbaum propertyp}, \eqref{nussbaum propertym} can also be used \cite{Ortega_paper}, \cite{AKO_book} with $z$  a PI term of the square error.

Up to now, few results are known for the robustness properties of those schemes  with respect to unmodelled dynamics. In an early paper, Georgiou and Smith \cite{GS} have pointed that the Nussbaum control scheme is nonrobust to fast parasitic first order dynamics (Fig. \ref{GeSm})
 \begin{figure}[!ht]\label{GeSm}
\centering
\includegraphics[width=2.7in]{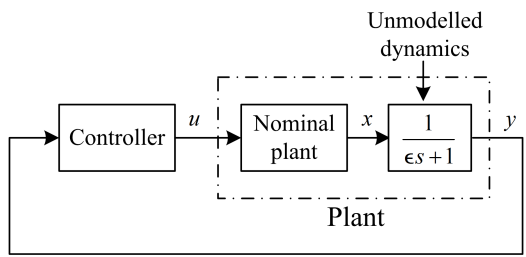}
\caption{The perturbed closed-loop system.}
\end{figure}
for a simple integrator nominal plant. The nonlinear PI controller on the other hand ensures boundedness and regulation in this particular case (simple integrator) as shown in \cite{Ortega_paper}, \cite{AKO_book}. For a nominally unstable plant model with sector bounded nonlinearity, we  proved in \cite{psillakis_SCL} that the nonlinear PI can provide global boundedness and attractivity  only if the PI control gain is a function of Nussbaum type.
In the special case of a perturbed linear system (Fig. \ref{GeSm})
\begin{align}\label{G-SL}
    \dot{x}&=\alpha x+bu\nonumber\\
    \epsilon & \dot{y}=x-y
\end{align}
($b\neq 0, \epsilon>0$) we showed in \cite{psillakis_SCL} that the nonlinear PI controller
 \begin{align}
    u&=\kappa(z)y\label{PI_o}\\
    z&=(1/2)y^2+\lambda\int_0^t{y^2(s)ds}\label{z}
\end{align}
regulates the output to zero if  $\max\{\epsilon\lambda,\epsilon(\alpha+\lambda)\}<1$  and $\kappa(\cdot)$ is a NF (Remark 1 of \cite{psillakis_SCL}). Thus, a combination of the two approaches, i.e. a nonlinear PI controller with a control gain satisfying the Nussbaum property yields improved robustness properties.

\begin{figure}[!ht]
\label{diagram1}
\centering
\includegraphics[width=2.7in]{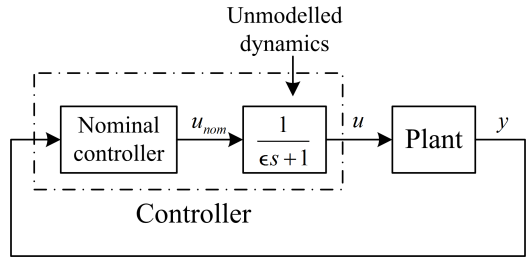}
\caption{The dual perturbed closed-loop system.}
\end{figure}
In this note, we consider \emph{the dual case, namely, the robustness of the  controller to ignored actuator dynamics} (see Fig. 2). To the best of the author's knowledge, this problem has not been treated before in the control literature of systems with unknown control directions.
This is an important issue  since, in many practical cases the fast actuator dynamics are often ignored during the control design. In robot manipulator control for example, the electrical motor dynamics are typically ignored and the joint torques are considered as nominal inputs.

To this end, we examine the  dynamic behavior of  the  nonlinear system  with \emph{first-order unmodelled actuator dynamics} given by
\begin{align}\label{G-SE}
    \dot{y}&=f(y)+bu\nonumber\\
    \epsilon\dot{u}&=u_{nom}-u
\end{align}
for a sector-bounded nonlinear mapping $f$ and  a nonlinear PI control law $u_{nom}$ designed for the nominal system
\begin{equation}\label{unperturbed}
    \dot{y}=f(y)+bu_{nom}.
\end{equation}
In Section \ref{PIpert}, we prove the main contribution of the paper which states that the closed-loop system defined by \eqref{G-SE} and the nonlinear PI controller designed for the nominal system is  \emph{globally bounded with an attractive equilibrium} if the actuator dynamics are sufficiently fast and $\kappa(z)$ belongs to a subclass of the class of Nussbaum functions.

These findings are in complete coherence with the perturbed system model case \cite{psillakis_SCL} showing that,  in both cases, the combination of a nonlinear PI with a Nussbaum type control gain is more robust than the simple nonlinear PI or the Nussbaum gain approach alone.
\subsection{Nonlinear PI control: nominal case}
For system \eqref{unperturbed}, we assume that $f(\cdot)$ is a sector-bounded nonlinearity, i.e.
\begin{align}
    f(y)&=\alpha(y)y\label{sector}\\
\alpha_1\leq \alpha&(y)\leq \alpha_2 \qquad \forall y\in\mathbb{R}\label{sector bounds}
\end{align}
for some constants $\alpha_1, \alpha_2\in\mathbb{R}$.
We further assume $b\neq 0$ for the system to be controllable.
\begin{lemma}
Let the nonlinear system \eqref{unperturbed} with nonlinearity \eqref{sector}, \eqref{sector bounds}. Consider also the nonlinear PI controller of the form
\begin{align}
    u_{nom}&=\kappa(z)y\label{unom}
\end{align}
\begin{align}
    z=\frac{1}{2}y^2+\lambda\int_0^t{y^2(s)ds}\label{zPInom}
\end{align}
($\lambda>0$) with PI gain $\kappa(z):=\beta(z)\cos(z)$ and $\beta(\cdot)$ a class $\mathcal{K}_{\infty}$ function  \footnote{A function $\beta(\cdot)$ belongs to class $\mathcal{K}_{\infty}$ if it is continuous, strictly increasing with $\beta(0)=0$ and $\lim_{z\rightarrow+\infty}\beta(z)=+\infty$.}. Then, for the closed-loop system we have that $z,y,u_{nom}$ are bounded and $\lim_{t\rightarrow\infty}y(t)=\lim_{t\rightarrow\infty}u_{nom}(t)=0$.
\end{lemma}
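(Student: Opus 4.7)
The plan is to reduce the closed-loop dynamics to a scalar evolution for $z(t)$, exploit the oscillatory factor in $\kappa(z)=\beta(z)\cos(z)$ to trap $z$ below a computable threshold, and then invoke Barbalat's lemma to obtain regulation of $y$.

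First, I would differentiate $z$ along trajectories of \eqref{unperturbed}--\eqref{unom} using \eqref{sector} to obtain
\begin{equation*}
\dot z \;=\; y\dot y + \lambda y^2 \;=\; y^2\bigl(\alpha(y)+\lambda+b\beta(z)\cos(z)\bigr).
\end{equation*}
Since $\alpha_1\le\alpha(y)\le\alpha_2$, the bracketed factor is controlled by the $\cos(z)$ contribution once $z$ is large. Assume $b>0$ for definiteness (the case $b<0$ is symmetric). Because $\beta$ is of class $\mathcal{K}_\infty$, I can pick $k_0\in\mathbb N$ large enough that $M:=(2k_0+1)\pi>z(0)$ and $b\beta(M-\delta)>\alpha_2+\lambda+1$ for some $\delta\in(0,\pi/4)$ chosen so that $\cos(z)\le -1/2$ on $[M-\delta,M+\delta]$. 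On this strip the bracketed factor is bounded above by a strictly negative constant $-\eta<0$ uniformly in $y$, giving $\dot z\le -\eta y^2\le 0$ whenever $z\in[M-\delta,M+\delta]$.

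Next, I would use this trapping inequality to conclude $z(t)\le M$ on the maximal interval of existence: since $\dot z\le 0$ throughout the strip regardless of the value of $y$, $z$ is nonincreasing whenever it lies in the strip and therefore cannot cross the upper boundary from below. Together with $z(0)<M-\delta$ (enlarging $k_0$ if needed), this rules out $z$ ever exceeding $M$. The choice $M=2k_0\pi$ handles $b<0$ analogously. With $z$ bounded, the inequality $(1/2)y^2\le z$ bounds $y$, $\lambda\int_0^t y^2\,ds\le z(t)$ places $y\in L^2([0,\infty))$, $\kappa(z)$ is bounded by continuity, so $u_{nom}=\kappa(z)y$ and $\dot y=\alpha(y)y+bu_{nom}$ are bounded, rendering $y$ uniformly continuous. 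Barbalat's lemma then yields $y(t)\to 0$, and hence $u_{nom}(t)\to 0$; global existence on $[0,\infty)$ follows from the boundedness of $y$ and $z$.

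I expect the main obstacle to be making the trapping step fully rigorous, especially at instants where $y(t)=0$ and thus $\dot z=0$ rather than $\dot z<0$. The uniform sign inequality $\dot z\le -\eta y^2\le 0$ is designed precisely to bypass this difficulty: it requires no lower bound on $y^2$ and leaves $z$ monotonically nonincreasing across the critical strip, which is exactly what is needed to prevent an upward crossing of the threshold $M$.
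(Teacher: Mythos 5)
Your proof is correct, and it takes a genuinely different route from the one the paper relies on (the paper omits the lemma's proof, deferring to \cite{psillakis_SCL}, but its methodology is visible in the proof of Theorem 1). You exploit the fact that in the \emph{nominal} loop the PI error obeys the factored equation $\dot z = y^2\bigl(\alpha(y)+\lambda+b\beta(z)\cos(z)\bigr)$, so that on a strip around a level $M$ where $\textrm{sgn}(b)\cos(M)=-1$ and $\beta(M-\delta)$ is large enough, $\dot z\le -\eta y^2\le 0$ uniformly in $y$; a barrier/invariance argument then gives the a priori bound $z(t)\le M$ on the maximal interval, and boundedness, global existence, $y\in\mathcal{L}_2\cap\mathcal{L}_\infty$ and Barbalat finish the job. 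The paper's technique is instead a proof by contradiction: assume $z$ unbounded, integrate $\kappa(z)\dot z$ between carefully chosen crossing times $t_{1k},t_{2k}$ of levels tied to the sign pattern of $\cos$, and force a nonnegative storage function to become negative. Your approach is more elementary, yields an explicit computable bound on $z$, and correctly uses only the $\mathcal{K}_\infty$ property of $\beta$ (no Nussbaum property is needed, consistent with the lemma, which admits $\beta(z)=z$). Its price is that it is specific to the nominal case: it does not survive the actuator dynamics of Theorem 1, where $\dot z = byu+(\alpha(y)+\lambda)y^2$ no longer factors through $\kappa(z)$ because $u$ lags $u_{nom}$ --- which is precisely why the paper develops the integral/contradiction machinery. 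One trivial quantitative slip: with $\cos(z)\le -1/2$ on the strip you need $\tfrac{|b|}{2}\beta(M-\delta)>\alpha_2+\lambda+\eta$ rather than $b\beta(M-\delta)>\alpha_2+\lambda+1$ (the factor $1/2$ is missing), but since $\beta\in\mathcal{K}_\infty$ this is absorbed by enlarging $k_0$ and does not affect the argument; note also that your barrier in fact pins $z$ at $M-\delta$, slightly stronger than the stated $z\le M$.
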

\begin{proof}
The proof is a simple generalization of the results of section 1.1 in \cite{psillakis_SCL} and is therefore omitted.
\end{proof}

\section{Nonlinear PI control: ignored actuator dynamics case}
\label{PIpert}
Assume now the existence of parasitic first order unmodelled actuator dynamics in the form of \eqref{G-SE} with  sector-bounded nonlinearity \eqref{sector}, \eqref{sector bounds}. The main result of the paper is given below.
\begin{theorem}\label{main_theorem}
Let the closed-loop system described by \eqref{G-SE},  \eqref{unom}, \eqref{zPInom} with sector-bounded nonlinearity given by \eqref{sector}, \eqref{sector bounds}. If
\begin{description}
  \item[(i)]  $\epsilon(\lambda+\alpha_2)<1$
  \item[(ii)] $\kappa(z)=\beta(z)\cos(z)$ with $\beta(\cdot)$ a $\mathcal{K}_{\infty}$ function having the  property 
\begin{equation}\label{beta_property}
    \lim_{z\rightarrow +\infty}\bigg[\frac{\beta(z+\epsilon)}{z}-c\beta(z)\bigg]=+\infty\quad,\:\forall c,\epsilon>0
\end{equation}
\end{description}
then, all closed-loop signals are bounded and $\lim_{t\rightarrow\infty}y(t)=\lim_{t\rightarrow\infty}u(t)=\lim_{t\rightarrow\infty}u_{nom}(t)=0.$
\end{theorem}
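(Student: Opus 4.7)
The plan is to prove boundedness of $z(t)$ on the maximal existence interval $[0, T_{\max})$ by a Nussbaum-type contradiction, and then to deduce boundedness and regulation of all remaining signals by a Barbalat argument. The starting point is the identity obtained by differentiating $z$ along the closed-loop trajectories, $\dot z = [\alpha(y) + \lambda]y^2 + b y u$, and then replacing $u$ by $u_{\text{nom}} - \epsilon \dot u = \kappa(z) y - \epsilon \dot u$ to get
\begin{equation*}
\dot z = [\alpha(y) + \lambda] y^2 + b\kappa(z) y^2 - b\epsilon\, y\,\dot u,
\end{equation*}
which matches the nominal Lemma 1 identity up to an $\mathcal{O}(\epsilon)$ correction. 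Integration over $[0,t]$, followed by integration by parts on the correction term and substitution of $\dot y = \alpha(y) y + bu$, yields an identity of the form
\begin{equation*}
z(t) = z(0) + \int_0^t[\alpha(y)+\lambda] y^2\,ds + b\int_0^t\kappa(z) y^2\,ds + \Phi_\epsilon(t),
\end{equation*}
where $\Phi_\epsilon$ is an $\epsilon$-dependent remainder containing a boundary term $-b\epsilon\, y(t)u(t)$ and integrals of $yu$ and $u^2$ carrying a factor of $\epsilon$.

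Next I would argue by contradiction: suppose $z$ is unbounded on $[0, T_{\max})$. Since $z \geq 0$, pick $t_k \uparrow T_{\max}$ with $z(t_k) \to +\infty$, divide the above identity by $z(t_k)$, and examine the limit. The condition $\epsilon(\lambda + \alpha_2) < 1$ is precisely what is needed to absorb $\Phi_\epsilon(t_k)/z(t_k)$ into the dissipative lower bound coming from $\lambda\int y^2\,ds$ (after using $z(t_k) = y^2(t_k)/2 + \lambda \int_0^{t_k} y^2\,ds$ and a Young's-inequality estimate on the mixed integral of $\alpha\,yu$), so the identity would force the oscillatory term $(1/z(t_k))\, b \int_0^{t_k} \beta(z)\cos(z)\,y^2\,ds$ to approach a finite limit. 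A Nussbaum-style averaging argument, applied to $\beta(z)\cos(z)$ (which inherits the Nussbaum property from the $\mathcal{K}_\infty$ growth of $\beta$), rules this out and produces the contradiction, so $z \in L^\infty$.

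The main obstacle will be making the classical Nussbaum argument robust to the actuator lag: within one oscillation of $\cos(z)$ the argument of $\beta$ can shift by $\mathcal{O}(\epsilon)$ because of the $-b\epsilon\, y \dot u$ term in $\dot z$, and when one translates estimates involving $\beta(z)$ into estimates involving $\beta(z+\epsilon)$ (or vice versa) one needs a comparison that dominates any multiplicative constant coming from the actuator terms, uniformly as $z \to \infty$. This is exactly the role of the extra hypothesis \eqref{beta_property}: the fact that $\beta(z+\epsilon)/z - c\beta(z) \to +\infty$ for every $c,\epsilon > 0$ guarantees that the shifted amplitude of $\kappa$ beats the polynomial-in-$z$ losses introduced by the actuator correction, keeping the Nussbaum-type contradiction intact. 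This hypothesis is what distinguishes the present theorem from Lemma 1 and from the sensor-perturbation result in \cite{psillakis_SCL}, where the unshifted Nussbaum property was already enough.

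Once boundedness of $z$ is in hand one has $\int_0^\infty y^2\,ds < \infty$ and $y \in L^\infty$; the stable first-order filter $\epsilon \dot u = \kappa(z) y - u$ driven by the bounded input $\kappa(z) y$ then gives $u \in L^\infty$, so all solutions are defined for all $t \geq 0$. Boundedness of $\dot y$ follows, $y$ is uniformly continuous and square-integrable, and Barbalat's lemma yields $y(t) \to 0$. Hence $u_{\text{nom}} = \kappa(z) y \to 0$, and the exponential stability of the actuator filter with vanishing input forces $u(t) \to 0$ as well, completing the proof.
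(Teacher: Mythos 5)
Your overall skeleton (prove $z\in\mathcal{L}_\infty$ by contradiction, then finish with Barbalat) matches the paper, and your endgame is fine. But the central step is where the work is, and as sketched it has a real gap. After you integrate $-b\epsilon\,y\dot u$ by parts and substitute $\dot y=\alpha(y)y+bu$, your remainder $\Phi_\epsilon(t)$ contains the term $+\epsilon b^2\int_0^t u^2\,ds$ with a definite sign. You claim condition (i) lets you absorb $\Phi_\epsilon$ into the dissipative term $\lambda\int_0^t y^2\,ds$, but nothing in your identity controls $\int u^2$ by $\int y^2$: the filter only gives $|u|\lesssim \sup_s|\kappa(z(s))y(s)|$, hence $\int u^2\lesssim \beta(z)^2\int y^2$, and absorbing that into $\lambda\int y^2$ would require $\epsilon b^2\beta(z)^2<\lambda$ uniformly, which fails as $z\to\infty$. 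The paper's proof avoids this precisely by not treating the actuator as a perturbation of the nominal identity: it introduces the cross-term storage function $S(u,y)=\frac{\epsilon}{2}u^2+\frac{\epsilon(\alpha_2+\lambda)}{b}uy+\frac{\ell}{2}y^2$, whose derivative along \eqref{G-SE} produces a genuine dissipation $-[1-\epsilon(\alpha_2+\lambda)]u^2$ (this is where condition (i) actually enters, as positive definiteness of the matrix $\Lambda(y)$ for $\ell$ large) and simultaneously converts the cross term $\kappa(z)yu$ into the exact differential $\frac{1}{b}\kappa(z)\dot z$ plus the residual $\frac{1}{b}(\alpha_2-\alpha(y))\kappa(z)y^2$. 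Without an equivalent device, your ``absorption'' step does not go through. Relatedly, your proposed averaging of $(1/z(t_k))\int_0^{t_k}\kappa(z)y^2\,ds$ is not a Nussbaum average of $\kappa$, because $y^2\,dt\neq dz$ here; the paper has to work on carefully chosen windows $z\in[z_{1k},z_{2k}]$ (placed where $\mathrm{sgn}(b)\kappa(z)\le 0$, to cope with the unknown sign of $b$) and bound the residual integral separately by $\frac{\alpha_2-\alpha_1}{\lambda|b|}\beta(z_{1k})z_{1k}$.

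You have also misidentified the role of hypothesis \eqref{beta_property}. The $\epsilon$ appearing there is a universally quantified dummy shift, not the actuator time constant, and the condition has nothing to do with an $\mathcal{O}(\epsilon)$ drift in the argument of $\beta$ caused by the lag. The paper applies it with shift $\pi/4$: the guaranteed negative contribution of $\frac{1}{b}\int\kappa$ over the last quarter-period is $-\frac{\pi}{2\sqrt 2|b|}\beta(z_{1k}+\pi/4)$, and \eqref{beta_property} is exactly what makes this dominate the \emph{superlinear} terms $z_{1k}\beta(z_{1k})$ arising from the crude bound $\frac{1}{|b|}\beta(z_{1k})z_{1k}$ on $\frac1b\int_0^{z_{1k}}\kappa$ and from the sector-mismatch residual (which is why, in the linear case $\alpha_1=\alpha_2$, the Corollary gets away with an ordinary Nussbaum function). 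So the proposal needs both a storage-function (or equivalent) mechanism to generate $-u^2$ dissipation under condition (i), and a correct accounting of where the strengthened Nussbaum property is consumed, before it can be considered a proof.
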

\begin{remark}
Property \eqref{beta_property} is satisfied for example for the function $\beta(z)=c_1[\exp(c_2 z^2)-1]$ for all $c_1,c_2>0$. Note that the function $\kappa(z)=\beta(z)\cos(z)$ (with $\beta(\cdot)$ some $\mathcal{K}_{\infty}$ function having the  property \eqref{beta_property}) is a Nussbaum function satisfying \eqref{nussbaum propertyp},\eqref{nussbaum propertym}, i.e. the function $\kappa(z)$ described by (ii) belongs to a special subclass of the class of all Nussbaum functions.
\end{remark}
\begin{proof}
From the definition of the PI error $z$ in \eqref{zPInom} and \eqref{G-SE} we have that
\begin{equation}\label{z_dynamics}
    \dot{z}=byu+(\alpha(y)+\lambda)y^2.
\end{equation}
Let now the function
\begin{align}\label{S}
    S(u,y):=\frac{\epsilon}{2}u^2+\frac{\epsilon(\alpha_2+\lambda)}{b}uy+\frac{\ell}{2}y^2.
\end{align}
with $\ell$ some positive constant to be defined. Replacing from \eqref{G-SE}, \eqref{unom}, \eqref{zPInom}, \eqref{z_dynamics} and canceling terms we have for its time derivative that
\begin{align}\label{dotS}
    \dot{S}=-[1-\epsilon(\alpha_2+\lambda)]u^2&-\frac{1}{b}(\alpha_2+\lambda)(1-\epsilon\alpha(y))uy-\lambda\ell y^2\nonumber\\
    &+\frac{1}{b}(\alpha_2-\alpha(y))\kappa(z)y^2+\ell \dot{z}+\frac{1}{b}\kappa(z)\dot{z}.
\end{align}
Eq. \eqref{dotS} can be written in matrix notation as
\begin{align}\label{dotS1}
    \frac{d}{dt}\bigg[S-\frac{1}{b}\int_0^z{(\kappa(s)+b\ell)ds}\bigg]=-w^T\Lambda(y)w+\frac{1}{b}(\alpha_2-\alpha(y))\kappa(z)y^2
\end{align}
with $w=\left[
          \begin{array}{cc}
            u & y \\
          \end{array}
        \right]^T$ and
\begin{equation}\label{Lambda}
    \Lambda(y):=\left[
    \begin{array}{cc}
    1-\epsilon(\lambda+\alpha_2) & \frac{1}{2b}(\lambda+\alpha_2)(1-\epsilon\alpha(y))\\
    * & \lambda\ell
    \end{array}\right]
\end{equation}
where $*$  denotes a symmetric w.r.t. the main diagonal element of $\Lambda(y)$. From 
the definition of $S(u,y)$ and $\Lambda(y)$ it is obvious that if we select  a sufficiently large constant $\ell$
\begin{equation}\label{ell}
    \ell>\bigg(\frac{\alpha_2+\lambda}{b}\bigg)^2\max\bigg\{\epsilon,\frac{(1-\epsilon\alpha_1)^2}{4\lambda\big[1-\epsilon(\lambda+\alpha_2)\big]}\bigg\}.
\end{equation}
and $\epsilon(\alpha_2+\lambda)<1$ then $S(u,y)\geq 0$ for all $(u, y)\in\mathbb{R}^2$ and
$\Lambda(y)$ is positive definite $\forall y\in\mathbb{R}$.  Then, from \eqref{dotS1} we have
\begin{align}\label{Sb1}
    S(t)+\int_0^t&{w^T(s)\Lambda(y(s))w(s)ds}\leq S(0)+\ell z(t)\nonumber\\
    &+\frac{1}{b}\int_0^{z(t)}{\kappa(s)ds}+\frac{1}{b}\int_0^t{(\alpha_2-\alpha(y(s)))\kappa(z(s))y^2(s)ds}.
\end{align}

We claim now that $z$ is bounded. Assume the opposite, i.e. that $z$ grows unbounded. Then, as $z$ progresses to infinity, consider the sequences of times 
$\{t_{1k}\}$,$\{t_{2k}\}$ defined by
\begin{align}\label{tik}
    t_{2k}&:=\inf\{t\in\mathbb{R}:z(t)=z_{2k}\}\\
    t_{1k}&:=\sup\{t\in[0,t_{2k}):z(t)=z_{1k}\}
\end{align}
with
\begin{align}\label{zik}
    z_{1k}:=2\pi k+(\pi/2)(1+\textrm{sgn}(b))-\pi/2\\
    z_{2k}:=2\pi k+(\pi/2)(1+\textrm{sgn}(b))+\pi/4
\end{align}
\begin{figure}[!ht]
\label{zt_red}
\centering
\includegraphics[width=2.5in]{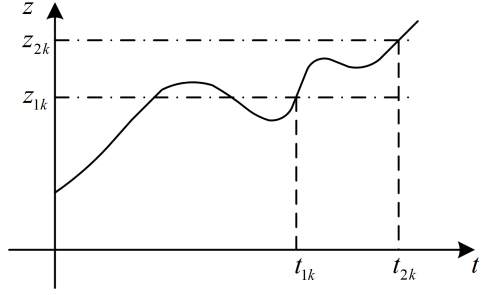}
\caption{Times  $t_{1k}$, $t_{2k}$.}
\end{figure}
(see Fig. 3). From the definitions above, we have that $z(t)\in[z_{1k},z_{2k}]$ for all $t\in[t_{1k},t_{2k}]$. Since we have assumed that $z$ is unbounded, $z$ will eventually pass sequentially from an infinite number of consecutive elements of the sequences $\{z_{1k}\}_{k=k_0}^{\infty}$, $\{z_{2k}\}_{k=k_0}^{\infty}$ at times 
$\{t_{1k}\}_{k=k_0}^{\infty}$ and $\{t_{2k}\}_{k=k_0}^{\infty}$ respectively with $k_0$ some positive integer determined from the  initial conditions. Note that whenever $z\in[z_{1k},z_{2k}]$ then $(1/b)\kappa(z)\leq 0$ and therefore $(1/b)\kappa(z(t))\leq 0$ for all $t\in[0,t_{2k}]$ for which $z(t)\geq z_{1k}$.
Hence, for the last term in the r.h.s. of \eqref{Sb1} we have the following upper bound for $t=t_{1k}$
\begin{align}\label{intBart1kt1k}
    \int_{0}^{t_{1k}}{\frac{\alpha_2-\alpha(y(s))}{b}\kappa(z(s))y^2(s)ds}&\leq \int_{\substack{t\in[0,t_{1k}]\\ z(t)\leq z_{1k}}}{\frac{\alpha_2-\alpha(y(s))}{b}\kappa(z(s))y^2(s)ds}\nonumber\\
    &\leq \frac{\alpha_2-\alpha_1}{|b|} \sup_{\substack{t\in[0,t_{1k}]\\ z(t)\leq z_{1k}}}\big\{\beta(z(t))\big\}\int_{\substack{t\in[0,t_{1k}]\\ z(t)\leq z_{1k}}}{y^2(s)ds}\nonumber\\
    &\leq \frac{\alpha_2-\alpha_1}{\lambda|b|}\beta(z_{1k})z_{1k}
\end{align}
and for $t=t_{2k}$
\begin{align}\label{int0t2k}
    \int_{0}^{t_{2k}}{\frac{\alpha_2-\alpha(y(s))}{b}\kappa(z(s))y^2(s)ds}&\leq \int_{0}^{t_{1k}}{\frac{\alpha_2-\alpha(y(s))}{b}\kappa(z(s))y^2(s)ds}\nonumber \\
    &\leq \frac{\alpha_2-\alpha_1}{\lambda|b|}\beta(z_{1k})z_{1k}.
\end{align}
as $\textrm{sgn}(b)\kappa(z(t))\leq 0$ for all $t\in [t_{1k},t_{2k}]$. Choosing also $t=t_{2k}$ in \eqref{Sb1} and taking into account \eqref{int0t2k} we arrive at
\begin{align}\label{Sb}
    S(t_{2k})\leq S(0)+\frac{1}{b}\int_0^{z_{2k}}{(\kappa(s)+b\ell)ds}+\frac{\alpha_2-\alpha_1}{\lambda|b|}\beta(z_{1k})z_{1k}.
\end{align}
Considering that $\textrm{sgn}(b)\kappa(z)\leq 0$ for $z\in[z_{1k},z_{2k}-\pi/2]$ and $\textrm{sgn}(b)\kappa(z)\leq -(1/\sqrt{2})\beta(z_{2k}-\pi/2)$ for $z\in[z_{2k}-\pi/2,z_{2k}]$ we have
\begin{align}\label{Nb}
    \frac{1}{b}\int_0^{z_{2k}}{(\kappa(s)+b\ell)ds}&\leq \ell z_{2k}+\frac{1}{b}\int_0^{z_{1k}}{\kappa(s)ds}+\frac{1}{b}\int_{z_{2k}-\pi/2}^{z_{2k}}{\kappa(s)ds}\nonumber\\
    &\leq \ell z_{2k}+\frac{1}{|b|}\beta(z_{1k})z_{1k}-\frac{\pi}{2\sqrt{2}|b|}\beta(z_{2k}-\pi/2).
\end{align}
Taking into account the fact that $z_{2k}=z_{1k}+3\pi/4$ and combining \eqref{Sb} and \eqref{Nb} we obtain
\begin{align}\label{Sbound}
   S(t_{2k})\leq &S(0)+\frac{3\ell\pi}{4}+\ell z_{1k}\nonumber\\
   &+\frac{1}{|b|}\bigg(1+\frac{\alpha_2-\alpha_1}{\lambda}\bigg)\beta(z_{1k})z_{1k}-\frac{\pi}{2\sqrt{2}|b|}\beta\bigg(z_{1k}+\frac{\pi}{4}\bigg).
\end{align}
From property \eqref{beta_property} of $\beta(\cdot)$, as  $k$ tends to infinity, the r.h.s. of \eqref{Sbound} becomes negative for sufficiently large values of $k$ for arbitrary values of the associated constants $b$, $\lambda$, $\ell$, $\alpha_1$, $\alpha_2$ and $S(0)$. This, however, will enforce negative values to $S(t_{2k})$ in the l.h.s. of \eqref{Sbound} that is not possible since $S$ is a nonnegative function. Thus, there is a contradiction and $z$ is bounded, i.e. $z\in\mathcal{L}_{\infty}$. Equivalently we have that $y\in\mathcal{L}_{\infty}\cap\mathcal{L}_{2}$. Then, from \eqref{Sb1} we obtain $S\in \mathcal{L}_{\infty}$, $u\in\mathcal{L}_{\infty}\cap\mathcal{L}_{2}$.
Furthermore, the system equations \eqref{G-SE} yield $\dot{u},\dot{y}\in\mathcal{L}_{\infty}$. Invoking now Barbalat lemma we result in $\lim_{t\rightarrow\infty}y(t)=\lim_{t\rightarrow\infty}u(t)=0$ that also yields $\lim_{t\rightarrow\infty}u_{nom}(t)=0$.
\end{proof}
For the linear system case, condition (ii) of Theorem \ref{main_theorem} can be reduced to $\kappa(\cdot)$ satisfying \eqref{nussbaum propertyp}, \eqref{nussbaum propertym} as stated in the following corollary.
\begin{corollary}
Let the closed-loop system described by the linear system with ignored fast actuator dynamics
\begin{align}\label{linear_perturbed}
    \dot{y}&=\alpha y+bu\nonumber\\
    \epsilon\dot{u}&=u_{nom}-u
\end{align}
and controller \eqref{unom}, \eqref{zPInom}. If   $\epsilon(\lambda+\alpha)<1$
and   $\kappa(\cdot)$ is a NF then, all closed-loop signals are bounded and $\lim_{t\rightarrow\infty}y(t)=\lim_{t\rightarrow\infty}u(t)=\lim_{t\rightarrow\infty}u_{nom}(t)=0$.
 \end{corollary}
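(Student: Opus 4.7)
The plan is to specialize the argument of Theorem \ref{main_theorem} to the linear case and to observe that the problematic cross term vanishes identically, so that only the bare Nussbaum property is needed to close the loop.

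First, I would take $\alpha(y)\equiv\alpha_1=\alpha_2=\alpha$ in the constructions \eqref{S}--\eqref{Sb1}. With $\ell$ chosen as in \eqref{ell}, $S(u,y)\geq 0$ and the matrix $\Lambda$ from \eqref{Lambda} becomes a \emph{constant} positive definite matrix under the standing assumption $\epsilon(\lambda+\alpha)<1$. The decisive observation is that the factor $\alpha_2-\alpha(y)$ is now identically zero, so the last term in \eqref{Sb1} drops out. This reduces the estimate to
\begin{equation*}
S(t)+\int_0^t w^T(s)\Lambda w(s)\,ds\leq S(0)+\ell z(t)+\frac{1}{b}\int_0^{z(t)}\kappa(s)\,ds.
\end{equation*}

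Next, I would prove $z\in\mathcal{L}_\infty$ by contradiction. Since $z(t)\geq 0$ and $z(0)$ is finite, if $z$ were unbounded then by continuity it would take, for any prescribed sequence $z_k\to+\infty$, values $z(t_k)=z_k$ at some sequence of times $t_k$. I would pick such a sequence by exploiting \eqref{nussbaum propertyp}, \eqref{nussbaum propertym}: if $b>0$ choose $z_k$ realizing the liminf in \eqref{nussbaum propertym}, and if $b<0$ choose $z_k$ realizing the limsup in \eqref{nussbaum propertyp}. In either case $(bz_k)^{-1}\int_0^{z_k}\kappa(s)\,ds\to-\infty$, so evaluating the displayed inequality at $t=t_k$ and dividing by $z_k$ forces the right-hand side to diverge to $-\infty$, contradicting $S(t_k)\geq 0$. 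This is precisely the step where the plain Nussbaum property suffices and no strengthened growth condition on $\beta$ is required.

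With $z$ bounded, the concluding arguments reproduce the end of the proof of Theorem \ref{main_theorem} verbatim: $y\in\mathcal{L}_\infty\cap\mathcal{L}_2$ from the definition of $z$; $S\in\mathcal{L}_\infty$ and $u\in\mathcal{L}_\infty\cap\mathcal{L}_2$ from the displayed inequality together with positive definiteness of $\Lambda$; $\dot y,\dot u\in\mathcal{L}_\infty$ from \eqref{linear_perturbed}; and Barbalat's lemma then yields $y(t),u(t)\to 0$, whence also $u_{nom}=u+\epsilon\dot u\to 0$. I do not anticipate any real obstacle: the sole analytical difficulty of Theorem \ref{main_theorem}, namely controlling the sign-indefinite cross term $(\alpha_2-\alpha(y))\kappa(z)y^2$ via the window decomposition \eqref{intBart1kt1k}--\eqref{Nb} and the dominance condition \eqref{beta_property}, is entirely absent here because that term vanishes identically in the linear setting.
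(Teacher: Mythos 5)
Your proposal is correct and follows essentially the same route as the paper: specialize the energy inequality \eqref{Sb1} to the linear case where the cross term $(\alpha_2-\alpha(y))\kappa(z)y^2$ vanishes, deduce boundedness of $z$ from the Nussbaum property alone (your divide-by-$z_k$ contradiction along a sequence realizing the appropriate $\liminf$/$\limsup$ is the standard argument the paper leaves implicit in ``from the NF assumption and \eqref{Sb1} boundedness of $z$ can then be proved''), and conclude via Barbalat as in Theorem \ref{main_theorem}. You merely fill in details the paper omits, and correctly identify why the strengthened condition \eqref{beta_property} is not needed here.
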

 \begin{proof}
 In the case of a linear system $\alpha(y)=\alpha_1=\alpha_2=\alpha$ and the last integral in the r.h.s. of \eqref{Sb1} is equal to zero.  From the NF assumption and \eqref{Sb1} boundedness of $z$ can then be proved. The rest of the proof continues along the lines of the proof of Theorem \ref{main_theorem} and is therefore omitted.
 \end{proof}
\begin{remark}
Simulation tests (see Section \ref{simulation}) for the nonlinear PI controller with a gain that is not a NF and the standard NF-based controller reveal that in both cases divergent output trajectories can occur even if $\epsilon(\alpha+\lambda)<1$. Thus, the proposed nonlinear PI control law with NF gain is more robust than the alternative approaches for both the case of ignored actuator dynamics and the case of  unmodelled system dynamics treated in \cite{psillakis_SCL}.
\end{remark}
 \section{Simulation examples}
 \label{simulation}
\subsection{Linear system}
A simulation study was performed for the linear system with ignored actuator dynamics (LSIAD) described by  \eqref{linear_perturbed} with parameters $\alpha=0.8$, $b=0.05$, $\epsilon=0.1$ and initial conditions $y(0)=5$, $u(0)=0$. We tested the case of a Nussbaum gain based (NG) controller
\begin{align}\label{nussbaum_controller}
    u_{nom}&=\zeta^2\cos(\zeta)y\nonumber\\
    \dot{\zeta}&=\lambda y^2
\end{align}
with $\zeta(0)=0$, $\lambda=0.15$ and a nonlinear PI controller \eqref{unom}, \eqref{zPInom} with gains $\kappa(z)=z\cos(z)$  (not a Nussbaum function) denoted as nPI and $\kappa(z)=z^2\cos(z)$ (Nussbaum function) denoted as nPI-N. For the specific selection of parameter $\lambda$, condition (i) of Theorem \ref{main_theorem} holds true.
\begin{figure}[!ht]
\centering
\includegraphics[width=3.7in]{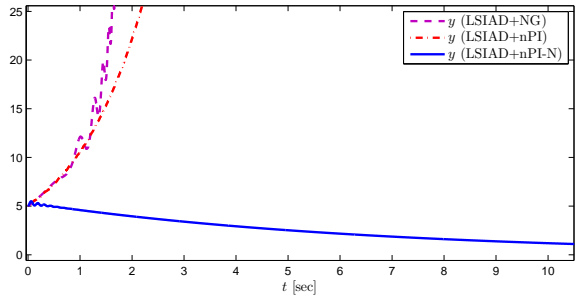}
\caption{Output responses $y(t)$ of a  linear system with ignored actuator dynamics (LSIAD) and the  controllers NG, nPI, nPI-N.}
\label{y_linear}
\end{figure}
The output trajectories $y(t)$ for the three cases are shown in Fig. \ref{y_linear}. We observe divergent output responses  for both the NG and the nPI controllers. Convergent solutions are only obtained when the nPI-N cpntroller is employed.

\subsection{Nonlinear system}
Let now the perturbed nonlinear system \eqref{G-SE} with $f(x)=3[1+2\sin(\exp(x))]x$ and $b=1$. From the definition of $f$ the sector bounds are $\alpha_1=-3$ and $\alpha_2=9$. Selecting now $\lambda=0.5$, condition (i) of Theorem 1 is satisfied for every $\epsilon<1/(\alpha_2+\lambda)=0.105$.  We apply the  control law \eqref{PI_o}, \eqref{z} with $\kappa(z)=\big[\exp(z^2/10)-1\big]\cos(z)$ satisfying condition (ii) of our main Theorem. Simulation results are shown in Fig. \ref{sector_example} with $\epsilon=0.1$  and initial conditions $u(0)=0$, $y(0)=5$.
\begin{figure}[!ht]
\centering
\includegraphics[width=3.9in]{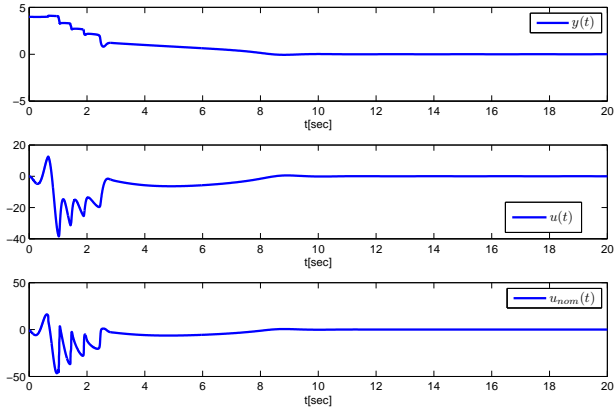}
\caption{Nonlinear system: Time responses of  $y(t),u(t)$ and $u_{nom}(t)$.}
\label{sector_example}
\end{figure}
We observe that the time responses of $y(t),u(t),u_{nom}(t)$ are bounded and converge to the origin  as expected from our theoretical analysis.
 \section{Conclusions}
Robustness to ignored actuator dynamics of the nonlinear PI control for a class of sector bounded nonlinear systems with unknown control direction is analyzed in this note. Selecting the PI control gain from a subclass of the NF class, we prove global boundedness and regulation  for sufficiently fast ignored actuator dynamics.
\label{}

\end{document}